\documentclass[10pt]{article}%
\usepackage{amsfonts}
\usepackage{amsmath}
\usepackage{amssymb}
\usepackage{amscd}
\usepackage{graphicx}
\usepackage{amsthm}%

\usepackage{latexsym}
\usepackage{amsfonts}
\usepackage{amssymb}
\usepackage{amscd}
\usepackage{amsmath}
\newtheorem{definition}{Definition}[section] 
\newtheorem{theorem}{Theorem}[section] 
\newtheorem{proposition}{Proposition}[section] 
 
\newtheorem{example}{Example}[section] 
\usepackage[letterpaper,margin=2cm]{geometry}

\usepackage{color}
\definecolor{red}{rgb}{1,0.2,0.2}
\definecolor{green}{rgb}{0.2,1,0.5}
\definecolor{blue}{rgb}{0,0,1}
\definecolor{lightblue}{rgb}{0.3,0.5,1}

\newenvironment{remark}{{\it Remark. }}{\hfill $\triangleleft$ \medskip}

\def\avg{
\begin{array}{c}
{} \\
{\rm avg} \\
{x^0\in {\cal S}}
\end{array} }

\def\j2from{
\begin{array}{c}
{j=1 } \\
{j\neq i}
\end{array} }
\def\[{\begin{equation}}
\def\]{\end{equation}}
\title{\bf A Note on Local Mode-in-State Participation Factors for Nonlinear Systems }

\author{Boumediene Hamzi and Eyad H. Abed
\thanks{Boumediene Hamzi is with the Department of Mathematics, AlFaisal University, Riyadh, KSA. {\tt\small boumediene.hamzi@gmail.com}; Eyad H. Abed is with the Department of Electrical and Computer Engineering and the Institute for Systems Research, 
        University of Maryland, College Park, MD 20742, USA. 
        {\tt\small abed@umd.edu}. Parts of this work were done when the first author was a Marie Curie fellow at Imperial College London (London, UK) then Ko\c{c} University (Istanbul, Turkey).}%
}
\def\NN{\mbox{I}\!\mbox{N}}
\newcommand{\RR}{\mathbb{R}}
\begin{document}

\maketitle
\thispagestyle{empty}

\begin{abstract}
The paper studies an extension to nonlinear systems of a recently proposed approach to the concept of modal participation factors. First, a definition is given for local mode-in-state participation factors for smooth nonlinear autonomous systems. The definition is general, and, unlike in the more traditional approach, the resulting participation measures depend on the assumed uncertainty law governing the system initial condition. The work follows Hashlamoun, Hassouneh and Abed (2009) in taking a mathematical expectation (or set-theoretic average) of a modal contribution measure with respect to an assumed uncertain initial state. As in the linear case, it is found that a symmetry assumption on the distribution of the initial state results in a tractable calculation and an explicit and simple formula for mode-in-state participation factors. 
\end{abstract}

\section{Introduction}
Analysis of modal content of the response of dynamic systems is of interest in many application areas, ranging from electric power networks to vibration of structures. Many approaches to modal analysis occur in the literature. For linear time invariant systems, modal content consists of the eigenmodes, and can be studied analytically. For nonlinear systems, the possibility of global oscillations gives rise to global oscillatory modes that might not be connected to the eigenmodes of the system's linearization at an equilibrium. In this paper, we focus on local modal analysis of nonlinear autonomous systems near an equilibrium, paying particular attention to what can be viewed as eigenmodes in a neighborhood of the equilibrium point of interest. The aim of the paper is to explore the possibility of extending to the nonlinear setting the modal participation analysis pursued by Hashlamoun, Hassouneh and Abed \cite{abed} for linear systems. This analysis attempts to systematically quantify the relative contributions of system modes to system states, and of system states to system modes. Here, system states refers to the scalar elements of the system state vector.

In the early 1980s, Verghese, Perez-Arriaga and Schweppe \cite{Perez-Arriaga1,Perez-Arriaga2} introduced quantities they referred to as modal participation factors. These quantities have been used widely, especially in the electric power systems field. In 2009, the authors of \cite{abed} presented a new approach to the fundamental definition of modal participation factors. The idea of modal participation factors, which will be reviewed further in the next section, is to give measures of the relative contribution of system modes in system states, and of system states in system modes. In \cite{abed}, such measures are developed by taking an average of relative contribution measures over an uncertain set of system initial conditions. The idea is that fixing the system initial condition affects the modal participations, and that initial conditions are in reality uncertain, indeed possibly random due to inherent noise. Indeed, if one takes a view that the initial time also isn't fixed, noise can be viewed as having the effect of allowing the initial condition to be re-set over time, effectively allowing the initial condition to explore a neighborhood of an equilibrium point over a short time interval. By taking an averaging approach, the authors of \cite{abed} find that a dichotomy arises in this new view of modal participation factors. In this dichotomy, participation factors measuring {\it mode-in-state participation} need to be viewed as distinct from participation factors measuring {\it state-in-mode participation}. This dichotomy was not recognized prior to \cite{abed}, and a single formula was previously used to quantify both types of modal participation.

In \cite{abed}, it was found that analytical formulas for mode-in-state participation factors fell out of the analysis very nicely, under basic symmetry assumptions on the distribution of the initial state. {The same symmetry assumptions did not allow for a similarly simple derivation of state-in-mode participation factors, and when a formula was obtained in a particular scenario on the initial state, that formula was more complicated than for the mode-in-state case and didn't share the desirable property of being independent to rescaling of the system state variables (i.e., the formula wasn't invariant under changes of state variable units). This issue is now better understood by the authors, and will be reported on elsewhere.}

Here, we explore extension of the work in \cite{abed}, {especially for the analysis of mode-in-state participation}, to the nonlinear setting, for local system behavior near an equilibrium point. We are able to give an analysis and derivation of formulas for mode-in-state participation factors (under basic symmetry assumptions as in the linear case). 
This work follows a different approach to defining participation factors for nonlinear models than that pursued in \cite{vittal}, where modal participation was studied from a fixed initial state using Taylor series methods.

{Before proceeding to the development of the paper, it is perhaps useful to  provide a brief discussion of studies on modal participation, addressing motivation of researchers on this topic, the various approaches taken in different disciplines, and applications that have been pursued.}

{The present work is motivated by the original work of Verghese, Perez-Arriaga and Schweppe \cite{Perez-Arriaga1,Perez-Arriaga2} that was mentioned above. The authors of  \cite{Perez-Arriaga1,Perez-Arriaga2} introduced their notion of modal participation factors as a tool to aid in modal analysis of large power grids, with benefits anticipated in tasks such as model order reduction and control design. Oscillatory modes are common in power systems, and it is important to have systematic tools for their analysis. Since power grids consist of interconnected areas and can cover large expenses of territory (indeed entire continents), engineers are naturally interested in obtaining reduced models that capture modes of special interest. The modal participation factors of \cite{Perez-Arriaga1,Perez-Arriaga2} were employed for this purpose, in an overarching framework that the authors referred to as Selective Modal Analysis (SMA) (a recent review of SMA is \cite{verghese_sma2013}). In addition, modal analysis in a power grid should provide tools for determining the best sites for insertion of actuators to control modes that may be troubling or dangerous, or for determining the best locations for placing measurement devices that allow system operators to monitor such modes in real time. An example of a recent application of the concepts in \cite{Perez-Arriaga1,Perez-Arriaga2} to power systems is \cite{setiada2018}, which focuses on power grids with significant levels of renewable generation . Early examples of work on actuator placement in power networks using the original modal participation concept include \cite{kundur1992,yakout1994}. Recent examples of modal participation studies in power systems motivated by the more recent approach of Hashlamoun, Hassouneh and Abed \cite{abed} include \cite{ramos,moraco,hill,netto}. The approach has also been applied in power electronics \cite{powerelectronics} and electromagnetic devices \cite{cenedese2016}.}

{The term ``modal participation factors" is also commonly used in the field of structural analysis, with applications in mechanical, aerospace, and civil engineering. The concept of modal participation factors introduced in electric power engineering in \cite{Perez-Arriaga1,Perez-Arriaga2} was developed independently of the notion used in structural analysis. Modal participation factors as studied in structural analysis have been used, for example, to study vibrations of tall buildings \cite{park} and rotorcraft dynamics \cite{prasad}. The concepts of modal participation factors in electric power engineering and in structural analysis are distinct. In the structural analysis framework, the focus has been on the impact of forcing functions on modal response. In contrast, in the electric power engineering concept, a large ostensibly autonomous dynamic system is considered (motivated by the driving power grid application). Bridging between these frameworks could be a fruitful area for future investigation. The two types of modal participation factors (in electric power engineering and related control theory literature, and in structural analysis) are not absolute by any means. These concepts are definitions deemed suitable for various purposes by their authors and employed over many years by practitioners in the respective fields. Later researchers have at times proposed modifications to address a perceived need for improvements. For example, in structural analysis, Chopra \cite{chopra96} introduced a new notion of modal participation factor aiming to make major improvements to the standard definition used in that field, including providing a more clear measure of modal contribution of an external forcing function and removing unit dependence from the standard notion. Similarly, the original concept of \cite{Perez-Arriaga1,Perez-Arriaga2} in electric power engineering has been revisited in \cite{abed}, as noted above. In the remainder of the paper, we focus on the modal participation factors concepts that have been used in electric power engineering, beginning with the work of \cite{Perez-Arriaga1,Perez-Arriaga2} and continuing with the work of \cite{abed}. The concepts from structural analysis were mentioned above to provide context for this work in the larger literature, but will not be addressed in the technical work in this paper.}

The remainder of the paper is organized as follows. In Section 2, needed background material is recalled. In Section 3, mode-in-state participation factors are defined for nonlinear systems in the vicinity of an equilibrium point, under a symmetry assumption on the uncertainty in the system initial condition. 
Conclusions and issues for further research are discussed in Section 4.
A preliminary version of this paper appeared in \cite{bh_abed_cdc}.
 
 \section{Background}
 
 {In this section, we give background material that is relevant to our investigation. In particular, we recall modal participation factors (the original definition as well as the more recent work as described above). We also sample some of the applications of modal participation factors, including very recent references to the literature.  Finally, we recall two fundamental theorems on local representations of nonlinear autonomous systems. As noted above, the remainder of the paper focuses on modal participation analysis as pursued in the electric power engineering literature and related work in control theory, and on extending the concepts given in \cite{abed} to a nonlinear setting.}

\subsection{Modal Participation Factors for Linear Systems: Original Definition \cite{Perez-Arriaga1, Perez-Arriaga2}}
Let $\Sigma_L$ denote the linear time-invariant system
 \[\label{sigma_l}\Sigma_L: \dot{x}=Ax\]
 where $x \in \RR^n$ and the state dynamics matrix $A \in \RR^{n \times n}$ has $n$ distinct eigenvalues $\lambda_i$, $i=1,\ldots,n$.
 
The system state $x(t)$ of course consists of a linear combination of exponential functions \[x^i(t) = e^{\lambda_i t} c^i,\]  where the vectors $c^i$ are determined by the system's initial condition $x(0)$. These functions are the \emph{system modes} and are useful in \emph{modal analysis of linear systems}. 


 Let $r^i$ be the right eigenvector of the matrix $A$ associated with eigenvalue $\lambda_i$, $i=1,\cdots,n$, and let $\ell^i$ be the left(row) eigenvector of $A$ associated with the eigenvalue $\lambda_i$, $i=1,\cdots,n$. The right and left eigenvectors are taken to satisfy the normalization
\[\ell^i r^j=\delta_{ij}, \]
where $\delta_{ij}=1$ if $i=j$ and $\delta_{ij}=0$ if $i \ne j$.
 
 Given a linear system $\dot{x}=Ax$ with initial condition $x(0)=x^0$, its solution can be written as 
 \[x(t)=e^{At}x^0=\sum_{i=1}^n (\ell^i x^0)e^{\lambda_i t} r^i.\]
The $k-$th state variable evolves according to 
 \[x_k(t)=(e^{At}x^0)_k=\sum_{i=1}^n (\ell^i x^0)e^{\lambda_i t} r^i_k\]

Using these facts and taking {two scenarios with rather special initial conditions}, Verghese, Perez-Arriaga and Schweppe \cite{Perez-Arriaga1,Perez-Arriaga2} motivated the following definition of quantities $p_{ki}$ which they named modal participation factors: 
\begin{eqnarray}
\label{eq:pki}
p_{ki}:= \ell_k^i r_k^i
\end{eqnarray}
Choosing the initial condition to be  $x^0=e_k$, the unit vector {along} the $k$-th coordinate axis, the authors of \cite{Perez-Arriaga1,Perez-Arriaga2} gave reasoning for considering the quantities  $p_{ki}$ as mode-in-state participation factors.  The scalars $p_{ki}$ are dimensionless. Next, employing a coordinate transformation to focus on the system modes and considering instead an initial condition $x^0 = r^i$, the right eigenvector corresponding to $\lambda_i$, the quantities  $p_{ki}$ were also given an interpretation as state-in-mode participation factors. Thus, it has been very common in papers and books using modal participation factor analysis to interchangeably refer to participation of modes in states and participation of states in modes, always using the same formula
 (\ref{eq:pki}) for both types of participation measure.

\subsection{Modal Participation Factors for Linear Systems: Recent Approach  (\cite{abed})}
{In \cite{abed}, it was argued that a deeper analysis of modal participation would not necessarily lead to identical measures for mode in state and state in mode participation factors. This issue continues to deserve the attention of the control, dynamics, and power systems research communities, largely because of the importance of modal analysis in many complex systems, in power engineering and in other application areas.}
{Simple examples were used in \cite{abed} to motivate the need for a new approach to defining modal participation factors. In fact, the examples indicated that it would be desirable to achieve definitions that gave different measures} for mode-in-state participation factors and state-in-mode participation factors. Indeed, the examples showed that, especially when quantifying the contribution of system states in system modes, the formula  (\ref{eq:pki}) could well fall short of giving an intuitively acceptable result. Thus, new  fundamental definitions were given based on averaging over the system initial condition, taken to be uncertain.

The linear system
\[\dot{x}=A x\]
usually represents the small perturbation dynamics near an equilibrium. The initial condition for such a perturbation is usually viewed as being an uncertain vector of small norm. In \cite{abed}, new definitions of mode-in-state and state-in-mode participation factors were given using deterministic {(i.e., set-theoretic)} and probabilistic uncertainty models for the initial condition.

\begin{definition}
\label{def-set}
In the set-theoretic formulation, the participation
factor measuring relative influence of the mode associated with
$\lambda_i$ on state component $x_k$ is
\begin{eqnarray}
p_{ki} &:=& \avg \frac{(\ell^i x^0) r^i_k}{x^0_k}
\label{eq:defmode-in-state}
\end{eqnarray}
whenever this quantity exists. Here, $x_k^0 = \sum_{i=1}^n (\ell^i x^0)
r^i_k$ is the value of $x_k(t)$ at $t = 0$, and ``${\rm avg}_{x^0t
\in {\cal S}}$'' is an operator that computes the average of a {scalar}
function over a set ${\cal S}\subset R^n$ (representing the set of
possible values of the initial condition $x^0$).
\end{definition}

With a probabilistic description of the uncertainty in the initial
condition $x^0$, the average in~(\ref{eq:defmode-in-state}) is
replaced {in \cite{abed}} by a mathematical expectation:

\begin{definition}
\label{def-prob}
The general formula for the
participation factor $p_{ki}$ measuring participation of mode $i$ in
state $x_k$ becomes
\begin{eqnarray}
p_{ki} &:=& E ~ \left\{ \frac{(\ell^i x^0) r^i_k}{x^0_k}\right\}
\label{eq:def-prob}
\end{eqnarray}
where the expectation is evaluated using some assumed joint
probability density function $f(x^0)$ for the initial condition
uncertainty. (Of course, this definition applies only when the
expectation exists.)
\end{definition}

In \cite{abed}, it was found that  both Definition~ \ref{def-set} (Eq. (\ref{eq:defmode-in-state})) and Definition~\ref{def-prob} (Eq. (\ref{eq:def-prob}))  lead to a simple result that agrees with Eq.~(\ref{eq:pki}) under a symmetry assumption on the uncertainty in the initial condition. In the set-theoretic definition, the symmetry assumption is that the
initial condition uncertainty set ${\cal S}$ is symmetric with
respect to each of the hyperplanes $\{x_k=0\}$, $k=1,\dots,n$.  In the probabilistic setting of Definition~\ref{def-prob}, the assumption is that the the initial condition
components $x^0_1,x^0_2,\dots,x^0_n$ are independent random variables with marginal density functions which are symmetric with respect to $x_k^0=0$, $k=1,2,\cdots,n$, or are jointly uniformly distributed over a sphere centered at the origin. Under either the set-theoretic or probabilistic symmetry assumption, it was found in \cite{abed} that
the same expression originally introduced by Perez-Arriaga, Verghese
and Schweppe~\cite{Perez-Arriaga1,Perez-Arriaga2} results as a measure of mode-in-state participation factors:
\begin{eqnarray}
p_{ki} &=& \ell^i_k r^i_k. \label{eq:pf-classical}
\end{eqnarray}

\subsection{State-in-Mode Participation Factors}
Hashlamoun, Hassouneh and Abed \cite{abed} also gave similar set-theoretic and probabilistic definitions for {state-in-mode} participation factors for linear systems. The calculations were found to be less straightforward than for the mode-in-state participation factors setting, even under {the same} symmetry assumption {on the initial condition as used in the mode-in-state participation factor calculation}. We will not recall the details of the development of state-in-mode participation factors for linear systems from \cite{abed}. We will simply recall from \cite{abed} the general definition and an associated result for the case of distinct real eigenvalues to have an idea of the nature of the results.

\begin{definition}
\label{def-modeinstate}
The participation factor of state $x_k$ in mode $i$ is
\[\pi_{ki} := \mbox{E}\bigg\{\frac{\ell_k^i x_k^0}{\sum_{j=1}^n (\ell_j^i x_j^0)}\bigg\}=\mbox{E}\bigg\{\frac{\ell_k^i x_k^0}{z_i^0}\bigg\},\]
whenever this expectation exists, where $z_i^0=z_i(0)=\ell^i x^0$, and where $z_i(t)$ is the $i^{th}$ system mode 
\[z_i(t)=e^{\lambda_i t}\ell^i x^0=e^{\lambda_i t} \sum_{j=1}^n (\ell_j^i x_j^0). \]
\end{definition}

It was shown in \cite{abed} that 
\begin{eqnarray*}
\pi_{ki}&=&  \mbox{E}\bigg\{\frac{\ell_k^i x_k^0}{\sum_{j=1}^n (\ell_j^i x_j^0)}\bigg\} \\ 
&=& \ell_k^i r_k^i + \sum_{j=1, j \ne i}^n \ell_k^i r_k^j E\bigg\{ \frac{z_j^0}{z_i^0}\bigg\}
\end{eqnarray*}

Note that the first term in the expression for $\pi_{ki}$ coincides with $p_{ki}$, the original participation factors formula. However, the second term does not vanish in general. This is true even when the components $x_1^0$, $x_2^0$,
$\cdots$, $x_n^0$ representing the initial conditions of the state are assumed to be independent. 
Assuming that the units of the state variables have been scaled to ensure that the probability
density function $f(x^0)$ is such that the components
$x^0_1,x^0_2,\dots,x^0_n$ are jointly uniformly distributed over the
unit sphere in $R^n$ centered at the origin, modal participation factors were evaluated in \cite{abed} using Definition \ref{def-modeinstate}, yielding the following explicit formula that is applicable under the foregoing uncertainty model for the system initial state.

\begin{proposition} (\cite{abed})
Under the assumption that the initial condition has a uniform probability density on a sphere centered at the origin,  the participation factor of state $x_k$ in mode $i$ is 
\begin{eqnarray}
\pi_{ki} &=& \ell^i_k r^i_k ~+~ \sum_{j=1,~j\neq i}^{n} \ell^i_k r^j_k
\frac{l^j(\ell^i)^T}{\ell^i(\ell^i)^T}. \label{eq:st-in-mode-res}
\end{eqnarray}
\end{proposition}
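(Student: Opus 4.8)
The plan is to reduce the proposition to a single scalar average and then exploit a reflection symmetry of the uniform measure on the sphere. Starting from the intermediate formula recorded just above the proposition,
\[\pi_{ki} = \ell_k^i r_k^i + \sum_{j=1,\,j\neq i}^n \ell_k^i r_k^j\, E\Big\{\frac{z_j^0}{z_i^0}\Big\},\]
it suffices to prove that, under the assumed uniform density on a sphere centered at the origin,
\[E\Big\{\frac{z_j^0}{z_i^0}\Big\} = E\Big\{\frac{\ell^j x^0}{\ell^i x^0}\Big\} = \frac{\ell^j(\ell^i)^T}{\ell^i(\ell^i)^T}\]
for every $j\neq i$. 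Substituting this identity termwise into the sum then yields (\ref{eq:st-in-mode-res}), so the entire content of the proposition is concentrated in this one ratio.

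First I would note that $(\ell^j x^0)/(\ell^i x^0)$ is invariant under any positive rescaling of $x^0$, so it depends only on the direction of $x^0$ and the average may be computed directly against the uniform surface measure $\sigma$ on the unit sphere $S^{n-1}$. Next I would decompose the left eigenvector $\ell^j$ into the part parallel to $\ell^i$ and the part orthogonal to it in the standard inner product,
\[\ell^j = \alpha\,\ell^i + w, \qquad \alpha := \frac{\ell^j(\ell^i)^T}{\ell^i(\ell^i)^T}, \qquad w(\ell^i)^T = 0.\]
Since the eigenvalues are distinct, $\ell^i$ and $\ell^j$ are linearly independent and hence $w\neq 0$. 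Dividing through by $\ell^i x^0$ gives
\[\frac{\ell^j x^0}{\ell^i x^0} = \alpha + \frac{w\,x^0}{\ell^i x^0},\]
so, because $\alpha$ is constant, the claim reduces to showing that the second term averages to zero over the sphere.

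The key step is the symmetry argument. Consider the Householder reflection $\rho$ of $\RR^n$ across the hyperplane with normal $w^T$, namely
\[\rho(x) = x - 2\,\frac{w x}{w w^T}\,w^T.\]
Being orthogonal, $\rho$ maps $S^{n-1}$ onto itself and preserves $\sigma$; a direct computation gives $w\,\rho(x) = -\,w x$ while $\ell^i\,\rho(x) = \ell^i x$, the latter because $w(\ell^i)^T = 0$. Thus the integrand $(w x)/(\ell^i x)$ is odd under the measure-preserving involution $\rho$, so its average over $S^{n-1}$ vanishes. This gives $E\{(w x^0)/(\ell^i x^0)\} = 0$, hence $E\{z_j^0/z_i^0\} = \alpha$, which is exactly the desired identity.

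I expect the main obstacle to be the singularity of the integrand on the set $\{\ell^i x^0 = 0\}$, where the average is not absolutely convergent. I would handle this by reading the average as a principal value: the singular set is a great subsphere of $\sigma$-measure zero which is itself invariant under $\rho$, so the odd-symmetry cancellation remains legitimate provided the integral is taken symmetrically about that set. The same integrability care is what justifies both the passage to the intermediate formula for $\pi_{ki}$ and the termwise substitution carried out in the first step.
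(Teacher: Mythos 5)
The paper does not actually prove this proposition; it is quoted from \cite{abed} with only the intermediate identity $\pi_{ki}=\ell_k^i r_k^i+\sum_{j\neq i}\ell_k^i r_k^j\,E\{z_j^0/z_i^0\}$ recorded beforehand, so there is no in-paper argument to match yours against. Taken on its own terms, your proof is correct and is a clean, geometric route to the key fact $E\{z_j^0/z_i^0\}=\ell^j(\ell^i)^T/\ell^i(\ell^i)^T$: the orthogonal decomposition $\ell^j=\alpha\ell^i+w$ with $w(\ell^i)^T=0$, followed by the Householder reflection that negates $wx$ while fixing $\ell^i x$ and preserving the uniform measure on the sphere, does exactly what is needed. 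This differs from the computation usually given for this result, which exploits the second-moment structure of the spherical law (for $x^0$ uniform on the sphere, $E\{x^0(x^0)^T\}\propto I$, so the best linear predictor of $z_j^0$ given $z_i^0$ is $\alpha z_i^0$ with the same $\alpha$, and one concludes $E\{z_j^0/z_i^0\}=\alpha$); your reflection argument makes the cancellation of the orthogonal component visibly a symmetry statement rather than a regression identity, and it extends verbatim to any distribution invariant under reflection across the hyperplane $\{wx=0\}$. The one genuine issue is the one you already name: $1/(\ell^i x)$ is not absolutely integrable over the sphere (the divergence near the great subsphere $\{\ell^i x=0\}$ is logarithmic in every dimension $n\ge 2$), so $E\{z_j^0/z_i^0\}$, and indeed $\pi_{ki}$ itself as defined, exists only as a principal value taken symmetrically about that subsphere. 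Your observation that the singular set is invariant under the reflection is exactly what legitimizes the odd-symmetry cancellation in that principal-value sense; it would strengthen the write-up to state explicitly that the proposition's formula is to be read under this regularization, since the definition's caveat ``whenever this expectation exists'' is otherwise not met.
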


 \subsection{Poincar\'e Linearization}
 Poincar\'e linearization is a well known technique for transforming an autonomous nonlinear system into a locally equivalent linear system via diffeomorphism. The technique is useful in this paper for extending the definitions of mode-in-state participation factors proposed in \cite{abed} to the nonlinear setting. In the following, we review the technique.
 
 Consider a nonlinear {system of ordinary differential equations} \[\label{nonlinear_ode} \dot{x}=f(x),\]
 where $x \in \RR^n$ and $f$ is an analytic vector field on $\RR^n$. Let $A=\frac{\partial f}{\partial x}|_{x=0}$ be the Jacobian of $f$ at the origin.
  
 \begin{definition}(\cite{arnold})
 Given a matrix $A \in \RR^{n \times n}$ with eigenvalues $\lambda_i$, $i=1,\cdots,n$, we say that the $n-$tuple $\lambda=(\lambda_1,\cdots,\lambda_n)$  is resonant if among the eigenvalues there exists a relation of the form
  \[(m,\lambda)=\sum_{k=1}^n m_k \lambda_k=\lambda_s, \]
 where $m=(m_1,\cdots,m_n)$, $m_k \ge 0$,  $\sum_k m_k \ge 2$. Such a relation is called a \emph{resonance}. 
 The number $|m|=\sum_{k=1}^n m_k$ is called the \emph{order} of the resonance.
 \end{definition}

 \begin{example} (\cite{arnold})
 The relation $\lambda_1=2 \lambda_2$ is a resonance of order $2$; the relation $2 \lambda_1= 3 \lambda_2$ is not a resonance; the relation $\lambda_1+\lambda_2=0$, or equivalently $\lambda_1=2 \lambda_1+\lambda_2$, is a resonance of order $3$.
 \end{example}
 
 \smallskip\noindent
 \begin{theorem}[Poincar\'e's Theorem \cite{arnold}] If the eigenvalues of the matrix $A$ are nonresonant, then the  nonlinear ODE 
 \[\dot{x}=Ax+O(|| x||^2 )\]
 can be reduced to the linear ODE
 \[\label{linear_ode}\dot{y}=Ay \]
 by a formal change of variable $x=y+\cdots$ (the dots denote series starting with terms of degree two or higher).
 \end{theorem}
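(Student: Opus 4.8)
The plan is to construct the transformation as a formal power series and to determine its homogeneous components one degree at a time, so that at each step a single ``homological equation'' must be inverted. First I would diagonalize $A$: since the eigenvalues $\lambda_1,\dots,\lambda_n$ are distinct in the setting of interest, a linear change of coordinates brings $A$ to $\operatorname{diag}(\lambda_1,\dots,\lambda_n)$, and this preliminary linear step does not affect the stated conclusion. Writing the field as $Ax + f_2(x) + f_3(x)+\cdots$, with $f_k$ homogeneous of degree $k$, I seek a near-identity change of variables $x = y + h_2(y) + h_3(y)+\cdots$, each $h_k$ a homogeneous vector field of degree $k$, that transforms $\dot x = Ax + \cdots$ into $\dot y = Ay$.

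Substituting $x = y + h(y)$ and imposing $\dot y = Ay$ gives $(I + Dh(y))\,Ay = A\,(y+h(y)) + f(y+h(y))$, which rearranges to the single identity
\begin{equation}
Dh(y)\,Ay - A\,h(y) \;=\; f\bigl(y+h(y)\bigr).
\end{equation}
I introduce the homological operator $L_A$, acting on vector fields by $(L_A h)(y) := Dh(y)\,Ay - A\,h(y)$. Since $L_A$ maps homogeneous degree-$k$ fields to homogeneous degree-$k$ fields, I would read off the identity degree by degree. At degree $k$ the right-hand side equals $f_k(y)$ plus terms built only from $f_j$ and $h_j$ with $j<k$, which are already known once the transformation has been fixed through degree $k-1$. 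Hence the equation at degree $k$ takes the form $L_A h_k = g_k$ with $g_k$ known, and $h_k$ is obtained by inverting $L_A$ on the finite-dimensional space of degree-$k$ fields. Crucially, because $h_k$ begins at degree $k$, choosing it does not disturb any term of degree below $k$, so the construction proceeds cleanly by induction on $k$.

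The heart of the argument, and the step I expect to be the main obstacle, is showing that $L_A$ is invertible on each space of homogeneous degree-$k$ fields for $k\ge 2$. Here I would exploit the diagonal form of $A$: on the monomial basis $y^m e_s$, with $m=(m_1,\dots,m_n)$, $|m|=k$, and $e_s$ the $s$-th coordinate field, a direct computation gives
\begin{equation}
L_A\bigl(y^m e_s\bigr) \;=\; \bigl[(m,\lambda) - \lambda_s\bigr]\, y^m e_s .
\end{equation}
Thus the monomials are eigenvectors of $L_A$, with eigenvalues $(m,\lambda)-\lambda_s$. The nonresonance hypothesis is precisely the statement that $(m,\lambda)\neq \lambda_s$ for every $s$ and every multi-index with $|m|\ge 2$; consequently all these eigenvalues are nonzero, $L_A$ is an isomorphism on each degree-$k$ space with $k\ge 2$, and every homological equation $L_A h_k = g_k$ has a unique solution. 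Feeding this back into the induction determines all the $h_k$ and produces the desired formal change of variables.

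I would close by noting two points. First, the result is purely formal: no claim is made about the convergence of the series $x = y + h_2(y)+\cdots$, so there are no analytic estimates to control and the sole requirement is invertibility of $L_A$ at each order. Second, if one drops the assumption of distinct eigenvalues, $A$ can still be put in Jordan form, and $L_A$ then becomes block upper-triangular on each degree-$k$ space with the same diagonal entries $(m,\lambda)-\lambda_s$; nonresonance again forces these entries to be nonzero, so $L_A$ remains invertible and the argument is unchanged.
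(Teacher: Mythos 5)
Your proposal is correct and is precisely the classical homological-equation argument of Poincar\'e (diagonalize $A$, solve $L_A h_k = g_k$ degree by degree, and use nonresonance to make the eigenvalues $(m,\lambda)-\lambda_s$ of $L_A$ on the monomial basis nonzero). The paper states this theorem without proof, citing Arnold, and your argument is essentially the one given in that reference, so there is nothing to reconcile.
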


If the $n-$tuple  $\lambda=(\lambda_1,\cdots,\lambda_n)$ is resonant, we will say that 
$$x^m:=x_1^{m_1}\cdots x_n^{m_n} e_s$$
is resonant if $\lambda_s=(m,\lambda)$, $|m| \ge 2$ with $e_i$ a vector in the eigenbasis of $A$ and $x_i$ are the coordinates with respect to the basis $e_i$. For example, for the resonance $\lambda_1=2 \lambda_2$, the unique resonant monomial is $x_2^2 e_1$. For the resonance $\lambda_1+\lambda_2=0$, all monomials $(x_1 x_2)^k x_s e_s$ are resonant \cite{arnold}.
 
  \begin{theorem}[Poincar\'e-Dulac Theorem \cite{arnold}]  If the eigenvalues of the matrix $A$ are resonant, then the  nonlinear ODE 
 \[\dot{x}=Ax+\cdots \]
 can be reduced to the ODE
 \[\dot{y}=Ay+w(y) \]
 by a formal change of variable $x=y+\cdots$ (the dots denote series starting with terms of degree two or higher), where all monomials in the series $w$ are resonant.
 \end{theorem}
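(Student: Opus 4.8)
The plan is to prove this by the classical method of normalizing the vector field one degree at a time, using successive near-identity coordinate changes to eliminate all non-resonant monomials while leaving the resonant ones untouched. Without loss of generality I would first pass to a basis in which $A$ acts as simply as possible; assuming $A$ is diagonalizable I work in its eigenbasis, so that $A=\mathrm{diag}(\lambda_1,\ldots,\lambda_n)$, and write the system as $\dot{x}=Ax+\sum_{r\ge 2}v_r(x)$, where each $v_r$ is a vector field whose components are homogeneous polynomials of degree $r$. (The non-diagonalizable case follows the same scheme after replacing $A$ by its semisimple part in the relevant computations.)

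The central object is the homological operator $L_A$ acting on the finite-dimensional space of degree-$r$ homogeneous vector fields by $(L_A h)(x)=Dh(x)\,Ax-A\,h(x)$. The first key step is to compute its spectrum on the monomial basis $\{x^m e_s:|m|=r\}$: a direct calculation gives $L_A(x^m e_s)=\big((m,\lambda)-\lambda_s\big)\,x^m e_s$, so each monomial is an eigenvector with eigenvalue $(m,\lambda)-\lambda_s$. Consequently the kernel of $L_A$ is spanned exactly by the \emph{resonant} monomials, those with $(m,\lambda)=\lambda_s$, while $L_A$ maps the span of the non-resonant monomials isomorphically onto itself and is therefore invertible there.

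With this in hand I proceed by induction on the degree $r\ge 2$. Suppose the system has already been brought, by a formal change of variable, to the form $\dot{x}=Ax+w_2+\cdots+w_{r-1}+v_r+(\text{higher order})$, where $w_2,\ldots,w_{r-1}$ contain only resonant monomials. I then apply the near-identity substitution $x=y+h_r(y)$ with $h_r$ homogeneous of degree $r$. Expanding $\dot{x}=(I+Dh_r)\dot{y}$ and inverting the formal series $(I+Dh_r)^{-1}=I-Dh_r+\cdots$, I would verify that the terms of degrees $2,\ldots,r-1$ are untouched (all corrections introduced by $h_r$ have degree $\ge r$) and that the new degree-$r$ term equals $v_r-L_A h_r$. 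This is the homological equation. Choosing $h_r$ to solve $L_A h_r=(\text{non-resonant part of }v_r)$, which is possible by the invertibility just established, leaves a degree-$r$ term $w_r$ consisting solely of resonant monomials. Iterating over all $r$ yields the formal transformation $x=y+\cdots$ and the normal form $\dot{y}=Ay+w(y)$ with $w$ resonant.

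The main obstacle, and the step requiring the most care, is the degree bookkeeping in the inductive step: one must confirm precisely that $x=y+h_r(y)$ modifies the degree-$r$ term by exactly $-L_A h_r$ while leaving the previously normalized terms of degree below $r$ invariant, so that the induction never unravels earlier work. A secondary subtlety is that the construction is only \emph{formal}: each degree is handled by a finite-dimensional linear solve, but the resulting power series for the transformation and for $w$ need not converge, which is exactly why the statement claims only a formal change of variable. Finally, when $A$ is not diagonalizable the operator $L_A$ is no longer diagonal, and I would invoke its semisimple–nilpotent decomposition to argue that its kernel is still spanned by the resonant terms, so that the elimination procedure goes through without change.
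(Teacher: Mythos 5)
The paper does not prove this statement; it is quoted as background from Arnold's book \cite{arnold}, so there is no in-paper proof to compare against. Your argument is the standard one found in that reference: diagonalize the eigenvalue computation $L_A(x^m e_s)=((m,\lambda)-\lambda_s)\,x^m e_s$ for the homological operator, observe that the kernel is exactly the span of the resonant monomials, and inductively solve the homological equation $L_A h_r = (\text{non-resonant part of } v_r)$ degree by degree, checking that the substitution $x=y+h_r(y)$ perturbs the degree-$r$ term by exactly $-L_A h_r$ and leaves lower degrees intact. That bookkeeping is correct, and your closing remarks about formality (no convergence claim) and about handling a non-semisimple $A$ via the semisimple part of $L_A$ (whose generalized kernel is still spanned by the resonant monomials) are the right caveats. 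The proof is correct and is essentially the canonical one.
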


 There are also several convergence results associated with Poincar\'e linearization, of which the following is the most well known.
 
 \smallskip\noindent
 \begin{theorem}[Poincar\'e-Siegel]
Suppose the eigenvalues $\{\lambda_i\}$, $i=1,\cdots,n$, of the linear part of an analytic vector field at an equilibrium point are nonresonant and either $\mbox{Re}( \lambda_i) >0$, $i=1,\cdots,n$ or $\mbox{Re}(\lambda_i) < 0$, $i=,\cdots,n$, or the $(\lambda_i)$ satisfy the Siegel condition, i.e. are such that there exists $C>0$ and $\nu$ such that for all $i=1,\cdots,n$ 
 \[|\lambda_i-(m,\lambda)| \ge \frac{C}{|m|^{\nu}} \]
 for all $m=(m_1,\cdots,m_n)$, where $(m_i)$ are nonnegative integers with $|m|=\sum_{i=1}^n m_i \ge 2$. Then the power series in Poincar\'e's Theorem  converges in some neighbourhood of the equilibrium point. 
 \end{theorem}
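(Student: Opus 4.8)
The plan is to recognize this as the classical convergence theorem for the linearizing transformation and to prove it by first analyzing the \emph{homological equation} that governs the formal change of variables, then establishing convergence separately in the two hypothesized regimes. Writing the system in the eigenbasis of $A$ as $\dot{x} = Ax + v(x)$ with $A = \mathrm{diag}(\lambda_1,\dots,\lambda_n)$ and $v$ analytic starting at order two, I would seek the transformation in the form $x = y + h(y)$ with $h$ beginning at order two. Substituting and demanding $\dot{y} = Ay$ yields the homological equation $Dh(y)\,Ay - A\,h(y) = v(y+h(y))$. The linear operator on the left, acting on a vector monomial $y^m e_s$, multiplies it by the factor $(m,\lambda) - \lambda_s$. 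Hence the formal series for $h$ can be solved order by order precisely because the eigenvalues are nonresonant: at each order the coefficient of $y^m e_s$ is obtained by dividing a known quantity (built from lower-order data) by the nonzero \emph{small divisor} $(m,\lambda)-\lambda_s$. This recovers the existence half (Poincar\'e's Theorem) and isolates the entire convergence question as a matter of controlling the accumulated reciprocals of these divisors.

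For the Poincar\'e case --- all $\mathrm{Re}(\lambda_i)>0$ or all $\mathrm{Re}(\lambda_i)<0$ --- I would exploit the geometric fact that the origin then lies strictly outside the convex hull of $\{\lambda_1,\dots,\lambda_n\}$. Consequently $|(m,\lambda)-\lambda_s|$ grows at least linearly in $|m|$, so the divisors are bounded below away from zero and only finitely many can be small. Convergence then follows from the classical Cauchy method of majorants: I would dominate $v$ by a scalar majorant series, propagate the bound through the recursion, and verify that the resulting majorant equation admits a solution analytic at the origin, giving $h$ a positive radius of convergence.

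The Siegel case is the substance of the theorem and the place where the main obstacle lies. Here the divisors $(m,\lambda)-\lambda_s$ can be arbitrarily small, and a naive majorant estimate accumulates a product of their reciprocals that threatens to diverge; the Siegel condition $|\lambda_s-(m,\lambda)| \ge C/|m|^{\nu}$ only limits each individual loss polynomially. The key difficulty is therefore to prevent these polynomial per-order losses from compounding into a divergent series. I would handle this not by the order-by-order (linearly convergent) scheme but by a rapidly convergent Newton-type iteration in the spirit of Kolmogorov--Arnold--Moser: at each step one solves the homological equation only approximately, composing successive near-identity transformations so that the order of the remaining nonlinearity roughly doubles, driving the error like $\varepsilon \mapsto \varepsilon^2$. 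This super-exponential contraction dominates the merely polynomial small-divisor losses controlled by the Siegel bound, and a careful inductive estimate on a nested sequence of shrinking polydiscs shows that the composed transformations converge to an analytic linearizing diffeomorphism on a neighbourhood of the equilibrium. Alternatively, Siegel's original majorant argument, which tracks the combinatorial multiplicity with which small divisors can appear along a given branch of the recursion and bounds their product, can be substituted for the iterative scheme.
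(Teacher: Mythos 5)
The paper does not actually prove this statement: it is quoted verbatim as background from Arnold \cite{arnold}, so there is no in-paper argument to compare yours against. Judged on its own terms, your outline is the standard and correct route to this classical result. The derivation of the homological equation $Dh(y)\,Ay - A\,h(y) = v(y+h(y))$ and the observation that the operator on the left acts diagonally on monomials $y^m e_s$ with eigenvalue $(m,\lambda)-\lambda_s$ are right, and they correctly reduce everything to controlling the small divisors. Your treatment of the Poincar\'e case (same-sign real parts $\Rightarrow$ $0$ outside the convex hull of the spectrum $\Rightarrow$ $|(m,\lambda)-\lambda_s|$ grows linearly in $|m|$, so a Cauchy majorant closes the argument) is the standard proof. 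For the Siegel case you correctly identify the obstruction --- the divisors are only polynomially bounded below and a naive majorant accumulates their reciprocals --- and you name the two standard remedies: a quadratically convergent Newton/KAM scheme on shrinking polydiscs, or Siegel's original combinatorial bound on the multiplicity of small divisors along a branch of the recursion. Be aware, though, that what you have written is a proof \emph{plan} rather than a proof: the entire analytic content of the theorem lives in the inductive estimates you defer to (the precise bound on how many times a divisor smaller than a given threshold can occur, or the verification that the quadratic error decay beats the polynomial losses on the nested domains), and none of that is carried out. As a sketch to be fleshed out from Arnold or Siegel's paper it is accurate; as a self-contained proof it is incomplete in exactly the place where the theorem is hard.
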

 
 \begin{remark} There are also some convergence results in the case of resonant eigenvalues; the reader is encouraged to consult \cite{arnold} for further details on Poincar\'e linearization.\end{remark}
 
 \subsection{Hartman-Grobman Theorem}
 Another very important result in the
local qualitative theory of nonlinear ordinary differential equations is the Hartman-Grobman Theorem, which
says that near a hyperbolic equilibrium point $x^e$, the nonlinear system (\ref{nonlinear_ode})
has the same qualitative structure as the linear system (\ref{linear_ode}).

 \begin{theorem}\cite{perko}
Let $E$ be an open subset
of $\RR^n$ containing the origin, let $f  \in {C}^1(E)$, and let $\phi_t$ be the flow of the
nonlinear system (\ref{nonlinear_ode}). Suppose that $f(0) = 0$ and that the matrix $A = Df(0)$
has no eigenvalue with zero real part. Then there exists a homeomorphism
$\varphi$ of an open set $U$ containing the origin onto an open set $V$ containing the origin such that for each $x^0 \in U$, there is an open interval  $I_0 \subset \RR$
containing zero such that for all $x^0 \in U$ and $t \in I_0$
\[\varphi \circ \phi_t(x^0) = e^{At}\varphi(x^0),\]
i.e., $\varphi$ maps trajectories of  (\ref{nonlinear_ode}) near the origin onto trajectories of (\ref{linear_ode}) near
the origin.
 \end{theorem}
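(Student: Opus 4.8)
The plan is to follow the discretization-and-contraction strategy presented in \cite{perko} (after Pugh): reduce the flow statement to a conjugacy of the time-one maps, build that conjugacy from the Banach fixed point theorem using the hyperbolicity of $A$, and then promote it back to the full flow. Since the assertion is purely local, I would first replace $f$ by $\tilde f(x)=Ax+\chi(x)(f(x)-Ax)$, where $\chi$ is a smooth bump function equal to $1$ on a small ball about the origin and supported in a slightly larger one. This leaves the flow unchanged near $0$ but makes the cut-off nonlinearity globally defined, bounded, and Lipschitz with arbitrarily small Lipschitz constant (by shrinking the support). The modified flow $\phi_t$ is then complete, and by variation of constants its time-one map has the form $\phi_1=B+G$ with $B:=e^{A}$ and $G$ bounded and Lipschitz-small; in particular $\phi_1$ is a global homeomorphism of $\RR^n$.

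Next I would exploit hyperbolicity. Since $A$ has no eigenvalue on the imaginary axis, $B=e^{A}$ has none on the unit circle, so $\RR^n$ splits $B$-invariantly as $E^s\oplus E^u$ with $\|B^s\|<1$ and $\|(B^u)^{-1}\|<1$ in an adapted norm. Seeking $\varphi=\mathrm{Id}+h$ with $h$ bounded and continuous, the identity $\varphi\circ\phi_1=B\circ\varphi$ is equivalent to $Bh-h\circ\phi_1=G$. Splitting $h=(h^s,h^u)$ along $E^s\oplus E^u$, I would rewrite the stable component as $h^s=B^s h^s\circ\phi_1^{-1}-G^s\circ\phi_1^{-1}$ and the unstable one as $h^u=(B^u)^{-1}h^u\circ\phi_1+(B^u)^{-1}G^u$. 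Because composition with the bijection $\phi_1^{\pm1}$ preserves the supremum norm, the resulting map $\mathcal{T}$ on the Banach space of bounded continuous functions is a contraction with constant $\max\{\|B^s\|,\|(B^u)^{-1}\|\}<1$, so Banach's theorem gives a unique bounded continuous fixed point $h$ and hence a continuous $\varphi$. Running the same argument with the roles of $B$ and $\phi_1$ interchanged produces $\psi=\mathrm{Id}+k$ conjugating in the opposite direction, and uniqueness of the fixed point forces $\psi\circ\varphi=\varphi\circ\psi=\mathrm{Id}$, so $\varphi$ is a homeomorphism.

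To pass from the time-one maps to the flow, I would average over one period, setting
\[ \Phi(x)=\int_0^1 e^{-As}\,\varphi\big(\phi_s(x)\big)\,ds, \]
and check, using $\varphi\circ\phi_1=B\varphi$ together with the group law for $\phi_t$, that the integrand is $1$-periodic in $s$ along each orbit; a change of variables then gives $\Phi\circ\phi_t=e^{At}\Phi$ for every $t$. Restricting $\Phi$ to the neighborhood of the origin on which $\tilde f=f$ yields the asserted homeomorphism and the identity $\varphi\circ\phi_t(x^0)=e^{At}\varphi(x^0)$ on suitable sets $U$, $V$ and an interval $I_0$.

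The step I expect to be the main obstacle is the functional-equation argument for the time-one map. The naive operator $h\mapsto Bh-h\circ\phi_1$ is not a contraction, and the whole scheme hinges on reorganizing it through the stable/unstable splitting so that each block is inverted in the direction along which $B$, respectively $B^{-1}$, contracts; the preparatory bump-function step, which guarantees that $G$ is a genuine bounded, Lipschitz-small perturbation and that $\phi_1=B+G$ is a global homeomorphism, is exactly what legitimizes this reorganization. A secondary technical point is verifying that the averaged map $\Phi$ is still a homeomorphism rather than merely a semiconjugacy.
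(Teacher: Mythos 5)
The paper does not prove this statement at all: it is quoted verbatim as the Hartman--Grobman theorem with a citation to Perko, and the proof lives entirely in that reference. Your sketch is essentially Perko's own argument (cut off the nonlinearity, conjugate the time-one map $\phi_1=e^A+G$ to $e^A$ by solving the functional equation for $h=\varphi-\mathrm{Id}$ via a contraction adapted to the stable/unstable splitting, then pass to the flow with the averaged map $\Phi(x)=\int_0^1 e^{-As}\varphi(\phi_s(x))\,ds$), and it is correct in outline.
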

 
\section{Mode-in-State Participation Factors for Nonlinear Systems}
Consider a nonlinear ODE
\[\label{nlsys}\dot{x}=f(x)\]
with $f \in {\cal C}(\RR^n;\RR^n)$, $f(0)=0$, and consider the Taylor expansion of $f$ around the origin
\[\label{taylor_exp}\dot{x}=Ax+\tilde{f}^{[2]}(x)+O(||x||^3)\]
where $A=\frac{\partial f}{\partial x}|_{x=0}$ and $\tilde{f}^{[2]}$ represents terms of order 2. We have the following result.

\begin{theorem}If the eigenvalues of $A$ are nonresonant (resp. satisfy one of the conditions of the Poincar\'e-Siegel Theorem) then there exists a diffeomorphism that formally (resp. analytically) transforms the nonlinear ODE (\ref{nlsys}) into a linear ODE. In this case, the mode-in-state participation factors of  (\ref{nlsys}) are the same as those of the linearized system $\dot{x}=Ax$.
\end{theorem}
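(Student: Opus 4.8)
The plan is to split the statement into two parts: first, the existence of a linearizing diffeomorphism, which is essentially a restatement of the results already recalled; and second, the invariance of the mode-in-state participation factors under this transformation. For the existence claim I would simply invoke Poincar\'e's Theorem in the nonresonant case, and the Poincar\'e--Siegel Theorem when the eigenvalues meet the Siegel condition (or all lie strictly in the open left or right half-plane). Writing the Taylor expansion as in (\ref{taylor_exp}), these theorems supply a change of variable $x = y + h(y)$, with $h$ a formal (resp. convergent) power series whose lowest-order terms are quadratic, that carries (\ref{nlsys}) to the linear system $\dot y = Ay$.

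The heart of the proof is the second part. First I would record the key structural fact that the linearizing map is tangent to the identity at the equilibrium: because $h$ starts at second order, $\frac{\partial h}{\partial y}\big|_{y=0} = 0$, so the Jacobian of $x = y + h(y)$ at the origin is the identity matrix. Consequently the linear part $A$ is literally unchanged by the transformation, and so are its eigenvalues $\lambda_i$ and its right and left eigenvectors $r^i$, $\ell^i$, together with the normalization $\ell^i r^j = \delta_{ij}$. In the $y$-coordinates the dynamics are exactly linear, $\dot y = Ay$, so the system modes are precisely $z_i(t) = e^{\lambda_i t}\ell^i y^0$, and the entire mode-in-state construction of Definitions~\ref{def-set} and~\ref{def-prob} transports verbatim to these coordinates.

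It then remains to evaluate the participation measure. Imposing the symmetry hypothesis on the uncertain initial state $y^0$ (symmetry of the set $\mathcal{S}$, or independent marginals symmetric about zero, or uniformity on a centered sphere), I would apply the linear result (\ref{eq:pf-classical}) of \cite{abed} directly in the $y$-coordinates to obtain $p_{ki} = \ell^i_k r^i_k$, which is exactly the mode-in-state participation factor of $\dot x = Ax$.

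The step I expect to require the most care is the reconciliation of the two coordinate descriptions. The participation factor is a local, first-order notion attached to the eigenstructure of $A$, and since $x = y + h(y)$ agrees with the identity to first order, i.e. $x_k = y_k + O(\|y\|^2)$, the higher-order correction $h$ should contribute nothing to the averaged ratio that defines $p_{ki}$. Making this last point precise, namely that the quadratic-and-higher terms of the diffeomorphism drop out of the mode-in-state measure so that participation ``in $x_k$'' and participation ``in $y_k$'' coincide in the local limit, is where the argument must be pinned down rather than merely asserted; everything else follows from the earlier theorems and the linear computation of \cite{abed}.
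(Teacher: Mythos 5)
Your proposal follows essentially the same route as the paper: invoke Poincar\'e's Theorem (resp.\ Poincar\'e--Siegel) to obtain a near-identity linearizing change of variables, observe that it preserves the linear part and hence the eigenstructure $(\lambda_i, r^i, \ell^i)$, and reduce the averaged ratio defining $p_{ki}$ to the linear computation of \cite{abed}, yielding $p_{ki}=\ell^i_k r^i_k$. The one step you flag as needing to be ``pinned down'' --- that the quadratic-and-higher terms of the diffeomorphism contribute nothing to the averaged ratio at $t=0$ --- is precisely the step the paper itself handles by writing out the correction terms $\theta_{j,m}$ explicitly and then asserting the conclusion, so you have correctly located the crux rather than omitted anything the paper actually supplies.
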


\begin{proof}
First, we normalize $A$ using the change of coordinates
\[\label{chancor1} z=V^{-1}x,\]
where $V=[r^1\, r^2 \, \cdots r^n]$ represents the matrix of right eigenvectors of $A$. Under the change of coordinates (\ref{chancor1}) the ODE (\ref{taylor_exp}) becomes
\[\dot{z}=\Lambda z + V^{-1} \tilde{f}^{[2]}(V^{-1}z)+O(||z||^3) :=\Lambda z + f^{[2]}(z)+O(||z||^3)\]

Next, we normalize the higher order terms through the change of coordinates
\[ \label{chancor2} \tilde{z}=\phi(z)=z+\phi^{[2]}(z)+O(||z||^3)=z+z^T\left[\begin{array}{c} P_1\\ \vdots \\ P_n\end{array} \right] z+O(||z||^3)\]
where $\phi \in {\cal C}(\RR^n;\RR^n)$. Using Poincar\'e linearization, we know that if the eigenvalues of $A$ are nonresonant, then there is a formal change of coordinates $\phi$ such that the trajectories of (\ref{nlsys})  are locally diffeomorphic to the trajectories of 
\[\label{linsys} \dot{\tilde{z}}=\Lambda \tilde{z}\]

If $\Lambda=\mbox{diag}(\lambda_i)|_{i=1}^n$ , then
\[\tilde{z}(t)=e^{\Lambda t} \tilde{z}(0), \]
whose $i-$th component is 
\[\tilde{z}_i(t)=e^{\lambda_i t}\tilde{z}_i(0).\]
Using (\ref{chancor2}), we get $z(t)=\phi^{-1}(e^{\Lambda t}\phi(z^0))$, which can be rewritten as
\[z(t)=e^{\Lambda t}\phi(z^0)-\phi(z(0))^Te^{\Lambda^t t} \left[\begin{array}{c} P_1\\ \vdots \\ P_n\end{array} \right] e^{\Lambda t}\phi(z(0)) +O(||z||^3),\]
and
\[z_i(t)=e^{\lambda_i t}\phi_i(z^0)-\phi^T(z^0)e^{\Lambda^T t}P_ie^{\Lambda t}\phi(z^0)+\cdots \]

Using (\ref{chancor1}), we get
\[x_k(t)=\left[ \begin{array}{ccc} r^1 \cdots r^n\end{array} \right]_{\scriptsize \mbox{k-th row} }\left[\begin{array}{c}z_1\\ \vdots\\ z_n \end{array} \right]=\sum_i r_k^i z_i(t) \] 

\[=\sum_i r_k^i (e^{\lambda_i t}\phi_i(z^0)-\phi^T(z^0)P_i\phi(z^0))+\cdots\]


It is instructive to consider the linear case first. We set $P_i=0$ and the higher order terms are also set to zero in (\ref{chancor2}). This gives

\[x_k(t)=\sum_{i=1}^nr_k^i e^{\lambda_i t}\phi_i(z^0)=\sum_{i=1}^nr_k^ie^{\lambda_i t}\ell^i x^0 \]

Then the participation of the $e^{\lambda_i t}$ mode in the state $x_k(t)$ is \[p_{ki}:=\mbox{avg}\frac{e^{\lambda_i t}r_k^i\ell^i x^0}{x_k(t)}|_{t=0}=\ell_k^i r_k^i\] (agreeing, of course, with the previous calculation of \cite{abed} in the linear case \cite{abed}).


Next, we consider the nonlinear setting, where we assume that $P_i \ne 0$.  The participation of $e^{\lambda_i t}$ in $x_k(t)$ is obtained using the set-theoretic definition as follows (quantities are evaluated at time $t=0$):
$$\mbox{avg}\frac{e^{\lambda_i t}r_k^i\phi_i(z^0)}{x_k(t)}|_{t=0}=\mbox{avg}\frac{e^{\lambda_i t}r_k^i\phi_i(z^0)}{\sum_{i=1}^nr_k^i(e^{\lambda_it}\phi_i(z^0)-\sum_{j,m}\theta_{j,m}e^{(\lambda_j+\lambda_m)t})}|_{t=0}$$

Since $\phi_i(z^0)=\ell^ix^0+\cdots$, then
\[\sum_{j,m}\theta_{j,m}=\sum_{j,m}\phi_j(z^0)\phi_m(z^0)p_{j,m}= \sum_{j,m} (\ell^jx^0)(\ell^mx^0)p_{j,m}\]

Hence, the participation of the mode $e^{\lambda_i t}$ in $x_k(t)$ is 
\begin{eqnarray*}p_{ki}&:=&\mbox{avg}\frac{e^{\lambda_i t}r_k^i\phi_i(z^0)}{x_k(t)}|_{t=0}\\ &=& \mbox{avg}\frac{e^{\lambda_i t}r_k^i\phi_i(z^0)}{\sum_{i=1}^nr_k^i(e^{\lambda_it}\phi_i(z^0)-\sum_{j,m}\theta_{j,m}e^{(\lambda_j+\lambda_m)t})}|_{t=0}=r_k^i \ell_k^i. \end{eqnarray*}
\end{proof}

Perhaps somewhat surprisingly, under the assumptions made, the mode-in-state participation factors are seen to agree with those of the linearized system.

\paragraph{Example}
Consider a nonlinear system whose linear part is from an example in \cite{abed}:
\[ \label{nl1}\dot{x}=\underbrace{\left[ \begin{array}{cc} a & b \\ 0 & d \end{array}\right]}_{A_1}x+\Psi(x),
\]
with $\Psi$ a polynomial of order $N \ge 2$. If $a \ne m \cdot d$ for any $m \in \NN$, then the eigenvalues of the matrix $A_1$ are nonresonant and, therefore, by the Poincar\'e's theorem there exists a formal transformation that transforms (\ref{nl1}) to
\[\label{poincare_lin1}\dot{z}=Az. \]
Furthermore, if $\lambda_1=a$ and $\lambda_2=d$
satisfy one of the conditions of the Poincar\'e-Siegel Theorem, then the transformation is analytic. In both cases, the mode-in-state participation factors of (\ref{nl1}) are locally equal to the mode-in-state participation factors of the linear system (\ref{poincare_lin1}).

A similar result holds for the following nonlinear system, whose linear part is from another example of \cite{abed}:
\[ \label{nl2}\dot{x}=\underbrace{\left[ \begin{array}{cc} 1 & 1 \\ -d & -d \end{array}\right]}_{A_1}x+\Psi(x),
\]
with $d \ne 1$ (nonresonance condition) and $\Psi$ is a polynomial of order $N \ge 2$. 

If the eigenvalues are resonant, and the origin is hyperbolic, we can still say something on the mode-in-state participation factors.

\begin{theorem} \cite{perko} If the origin is a hyperbolic point then there exists a homeomorphism that transforms the nonlinear ODE (\ref{nlsys}) into the linear ODE (\ref{linsys}). In this case, the mode-in-state participation factors of  (\ref{nlsys}) are the same as those of the linearized system $\dot{x}=Ax$.
\end{theorem}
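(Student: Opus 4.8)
The plan is to replace the smooth Poincar\'e linearization of the previous theorem by the topological conjugacy of the Hartman--Grobman Theorem. Since the origin is hyperbolic, $A=Df(0)$ has no eigenvalue on the imaginary axis, so the Hartman--Grobman Theorem furnishes a homeomorphism $\varphi$ of a neighbourhood $U$ of the origin onto a neighbourhood $V$ of the origin, fixing the origin, with $\varphi\circ\phi_t(x^0)=e^{At}\varphi(x^0)$ for $x^0\in U$ and $t$ in a small interval about $0$. Writing $y:=\varphi(x)$, the local flow of (\ref{nlsys}) is carried onto that of $\dot y=Ay$, so $y(t)=e^{At}y^0$ with $y^0=\varphi(x^0)$; passing to the eigenbasis of $A$ turns this into the diagonal form (\ref{linsys}).

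First I would record the modal decomposition in the conjugate coordinate. Exactly as in the linear theory recalled in Section~2, $y(t)=\sum_{i=1}^n(\ell^i y^0)\,e^{\lambda_i t}r^i$, and the contribution of the mode $e^{\lambda_i t}$ to the $k$-th conjugate coordinate is $(\ell^i y^0)\,e^{\lambda_i t}r^i_k$. The conjugacy identifies the local modes of (\ref{nlsys}) with the modes $e^{\lambda_i t}$ of the linear target, so the mode-in-state participation measure for the nonlinear system is obtained by applying the averaging functional of Definition~\ref{def-set} (or Definition~\ref{def-prob}) to the ratio of this modal contribution to the full state, evaluated at $t=0$. Under the symmetry hypothesis on the uncertain initial condition, this is precisely the computation performed in \cite{abed}, which returns the classical formula $p_{ki}=\ell^i_k r^i_k$ of (\ref{eq:pf-classical}), i.e. the mode-in-state participation factor of $\dot x=Ax$.

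The step I expect to be the main obstacle is justifying that the participation measure is genuinely invariant under the conjugacy, given that Hartman--Grobman supplies only a \emph{homeomorphism}. In the nonresonant case the conjugacy is (formally or analytically) smooth, so one may Taylor-expand it as in (\ref{chancor2}), write the nonlinear corrections explicitly, and check directly that they drop out at $t=0$; here no such expansion is available, since $\varphi$ need not be differentiable at the origin, and the clean cancellation argument of the previous proof cannot be reproduced verbatim. The route I would take is to observe that the mode-in-state participation factor, once evaluated at $t=0$, is a function of the linear modal data $(\lambda_i,r^i,\ell^i)$ alone---data that the conjugacy carries unchanged, because the linear system produced by Hartman--Grobman is governed by the same matrix $A=Df(0)$ whose modes define $p_{ki}$. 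One then only needs to fix the coordinate in which the symmetry assumption is imposed: performing the averaging in the linearizing coordinate $y$, where the dynamics are exactly linear, lets the symmetric-initial-condition calculation of \cite{abed} apply verbatim and deliver $\ell^i_k r^i_k$, which is the content of the theorem.
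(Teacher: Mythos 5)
Your proposal follows essentially the same route as the paper's own proof: invoke the Hartman--Grobman conjugacy to carry the local flow of (\ref{nlsys}) onto that of $\dot z=\Lambda z$ (after diagonalizing $A$ via the eigenvector matrix), and then appeal directly to the linear-case computation, which yields $p_{ki}=\ell^i_k r^i_k$. You are in fact more candid than the paper's two-line argument about the one delicate point --- that the conjugacy is only a homeomorphism, so the modal decomposition of the \emph{original} coordinate $x_k(t)$ cannot be extracted by Taylor expansion as in the nonresonant case --- and your resolution (perform the averaging in the linearizing coordinate, where the calculation of \cite{abed} applies verbatim) is precisely what the paper's proof does implicitly.
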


\begin{proof}
First, we normalize $A$ using the change of coordinates (\ref{chancor1}) where $V=[r^1\, r^2 \, \cdots r^n]$ represents the matrix of right eigenvectors of $A$. Under the homeomorophism in the Hartman-Grobman theorem the ODE (\ref{nlsys}) becomes
\[\dot{z}=\Lambda z \]
The proof regarding mode-in-state participation factors comes directly from applying the result for the linear case in Section 3.

\end{proof}

\paragraph{Example}\cite{perko} Consider the system \begin{eqnarray}\label{eqn1} \dot{y}&=&-y \\ \label{eqn1a} \dot{z}&=&z+y^2 \end{eqnarray}
It can be shown \cite{perko} that with the homeomorphism
\[\phi(y,z)=\left[ \begin{array}{c}y \\ z+\frac{y^2}{3} \end{array}\right] \]
the solution of (\ref{eqn1})-(\ref{eqn1a}) is homeomorphic to the solution of 
\begin{eqnarray}\label{eqn2} \dot{y}&=&-y \\ \dot{z}&=&z \end{eqnarray}
and, therefore, the mode-in-state participation factors of the nonlinear system are the same as those of the linearized system.


\section{Conclusion}
There is a dichotomy in modal participation for linear systems. Hence we expect a similar dichotomy for nonlinear systems. Participation of modes in states is relatively easy to evaluate using averaging over an uncertain set of initial conditions assuming symmetric uncertainty.  Somewhat surprisingly, the mode-in-state participation formulas under these circumstances were found to be the same for a nonlinear system as for its linearization, assuming the nonresonance condition. Participation of states in modes for nonlinear systems is an open question, and its distinction from mode-in-state participation factors is part of the  dichotomy in modal participation seen in the linear case. Besides calculation of state-in-mode participation factors, some other issues that could be considered in future work are: computing modal participation factors for nonlinear systems from data; using the Frobenius-Perron operator to compute these measures. {Another possible extension is to use the recently introduced ``nonlinear eigenvalues'' and ``nonlinear eigenvectors'' for nonlinear systems \cite{kawano,padoan} to introduce possibly ``more nonlinear'' notions of modal participation factors for nonlinear systems.}
{Moreover, as mentioned in Section 1, bridging between modal participation concepts that have been proposed and used in different engineering fields would be worthwhile.}


\section{ACKNOWLEDGMENTS}

BH thanks the European Commission and the
Scientific and the Technological Research Council of Turkey (Tubitak)
for financial support received through Marie Curie Fellowships. EHA thanks the US Air Force Office of Scientific Research for partial support under grant \#FA9550-09-1-0538.


\end{document}